\theoremstyle{plain}%
\begin{document}

\title[Article Title]{Refined Graph Encoder Embedding via Self-Training and Latent Community Recovery}

\author*[1]{\fnm{Cencheng} \sur{Shen}}\email{shenc@udel.edu}

\affil[1]{\orgname{University of Delaware}, \orgaddress{\city{Newark}, \postcode{19716}, \state{DE}, \country{US}}}

\author[2]{\fnm{Jonathan} \sur{Larson}}

\affil[2]{\orgname{Microsoft Research}, \orgaddress{\city{Redmond}, \postcode{98052}, \state{WA}, \country{US}}}

\author[2]{\fnm{Ha} \sur{Trinh}}


\author[3]{\fnm{Carey E.} \sur{Priebe}}

\affil[3]{\orgname{Johns Hopkins University}, \orgaddress{\city{Baltimore}, \postcode{21218}, \state{MD}, \country{US}}}

\abstract{This paper introduces a refined graph encoder embedding method, enhancing the original graph encoder embedding through linear transformation, self-training, and hidden community recovery within observed communities. We provide the theoretical rationale for the refinement procedure, demonstrating how and why our proposed method can effectively identify useful hidden communities under stochastic block models. Furthermore, we show how the refinement method leads to improved vertex embedding and better decision boundaries for subsequent vertex classification. The efficacy of our approach is validated through numerical experiments, which exhibit clear advantages in identifying meaningful latent communities and improved vertex classification across a collection of simulated and real-world graph data.}

\keywords{Graph Embedding, Discriminant Analysis, Hidden Community Recovery}

\maketitle

\section{Introduction}
Graph data has surged in popularity, serving as an ideal data structure for capturing interactions across diverse domains, including social networks, citation systems, communication networks, and physical and biological systems \cite{GirvanNewman2002, newman2003structure, barabasi2004network, boccaletti2006complex, VarchneyEtAl2011, ugander2011anatomy}. This rise is driven by the increasing availability of public graph datasets \cite{snapnets,nr,OGBData} and a growing interest in graph learning techniques, such as graph convolutional networks (GCNs) \cite{kipf2017semi, Wu2019ACS, xu2019powerful}, node2vec \cite{grover2016node2vec}, and information network embedding \cite{tang2015line}. These methods have enabled numerous applications, such as knowledge graph embeddings \cite{dettmers2018convolutional}, graph recommendation \cite{he2020lightgcn}, label propagation \cite{wang2022combining}, cross-media retrieval \cite{yao2024efficient}, and contrastive learning for clustering \cite{zhao2024graph}, among many others.

Many graph analysis methods can be viewed as graph embedding techniques, which project graph-structured data into low-dimensional spaces while preserving both structural and semantic properties. Among them, spectral embedding holds a foundational role in graph data analysis. It maps graph data into a $d$-dimensional space using either the graph adjacency matrix or the graph Laplacian \cite{RoheEtAl2011,SussmanEtAl2012, Priebe2019}. Spectral embedding is particularly notable for its strong theoretical properties, as its vertex embeddings converge to the underlying latent positions under popular random graph models \cite{SussmanTangPriebe2014, JMLR:v18:17-448, Patrick2022b}. Consequently, spectral embedding provides a versatile and theoretically sound graph learning technique, with many works spanning vertex classification \cite{TangSussmanPriebe2013, mehta2021neuronal}, community detection \cite{mu2022spectral, gallagher2023spectral}, vertex nomination \cite{zheng2022vertex}, and the analysis of multiple graphs and time-series \cite{arroyo2021inference, Patrick2022a}.

However, the scalability of spectral embedding has been a major bottleneck due to its use of singular value decomposition (SVD), which can be time-consuming for moderate to large graphs. To address this bottleneck, a recent method called one-hot graph encoder embedding \cite{GEE1} has been proposed, offering a supervised extension of spectral embedding. By incorporating vertex labels when available, this approach significantly improves computational efficiency. Furthermore, the graph encoder embedding often outperforms spectral embedding in finite-sample performance across diverse applications, such as large-scale dynamic graph analysis \cite{GEEDynamics}, general graphs including distance and kernel matrices \cite{GEEDistance, EncoderKernel}, and computing graph covariance and correlation \cite{GEECorr,GraphCovLLM}.

In this paper, we propose an enhanced version of the graph encoder embedding that incorporates self-training and latent community recovery. The refined graph encoder embedding can detect hidden sub-communities within the given labels. It has the capability to detect all sub-communities if desired, or to selectively detect latent communities that can benefit subsequent vertex classification and prevent over-refinement. Using stochastic block models with varying parameter choices, we provide illustrative examples to show what the refined method can detect, the meaning of the recovered communities in context, and why detection shall be restricted for the benefit of vertex classification. Theoretical rationale is provided, and the proposed method is validated through simulations and a wide range of real graph data. 
All experiments are carried out on a local desktop with MATLAB 2024a, Windows 10, an Intel 16-core CPU, and 64GB of memory. The appendix includes algorithm pseudocode, theorem proofs, and additional numerical experiments. The code and data are available on GitHub\footnote{\url{https://github.com/cshen6/GraphEmd}}. 

\section{Review}
In this section, we briefly review the stochastic block model (SBM), a classical random graph model. This is followed by an overview of the adjacency spectral embedding and the original graph encoder embedding.

\subsection{Graph Adjacency and Stochastic Block Models}

A graph consists of a set of vertices $\{v_i, i=1,\ldots,n\}$ and a set of edges $\{e_j, j=1,\ldots,s\}$, which can be succinctly represented by an $n \times n$ adjacency matrix $\mathbf{A}$. In this matrix, $\mathbf{A}(i,j)=0$ indicates the absence of an edge between vertex $i$ and $j$, while $\mathbf{A}(i,j)=1$ indicates the existence of an edge. The adjacency matrix can also be weighted to reflect a weighted graph and, more generally, can represent any similarity or dissimilarity matrix, such as pairwise distance or kernel matrices.

Under SBM, each vertex $i$ is assigned a label $\mathbf{Y}(i) \in \{1,\ldots, K\}$. The probability of an edge between a vertex from class $k$ and a vertex from class $l$ is determined by a block probability matrix $\mathbf{B}=[\mathbf{B}(k,l)] \in [0,1]^{K \times K}$. For any $i \neq j$, it holds that
\begin{align*}
\mathbf{A}(i,j) &\sim \operatorname{Bernoulli}(\mathbf{B}(\mathbf{Y}(i), \mathbf{Y}(j))).
\end{align*}
To generate an undirected graph, simply set $\mathbf{A}(j,i) = \mathbf{A}(i,j)$ for all $i<j$. More details on SBM can be found in \cite{HollandEtAl1983, SnijdersNowicki1997, airoldi2008mixed, KarrerNewman2011}. 

Moreover, the degree-corrected stochastic block model is an extension of the SBM that accounts for the sparsity observed in real graphs \cite{ZhaoLevinaZhu2012,yan2012model}. It assigns a degree parameter $\theta_i \in [0,1]$ to each vertex $i$. Given the degrees, each edge from vertex $i$ to another vertex $j$ is independently generated as follows: 
\begin{align*}
\mathbf{A}(i,j) \sim \operatorname{Bernoulli}(\theta_i \theta_j \mathbf{B}(\mathbf{Y}(i), \mathbf{Y}(j))).
\end{align*}

\subsection{Spectral Embedding and Encoder Embedding}
\label{rev1}
Spectral embedding is a classical method in machine learning \cite{Jordan2002,BelkinNiyogi2003}. When applied to a graph adjacency matrix $\mathbf{A} \in \mathbb{R}^{n \times n}$, it utilizes the singular value decomposition (SVD): 
\begin{align*}
\mathbf{A} = \mathbf{U}\mathbf{S}\mathbf{V}^{T}.
\end{align*}
Let $\mathbf{S}_{d}$ be the first $d \times d$ submatrix of $\mathbf{S}$, and $\mathbf{V}_{d}$ be the first $n \times d$ submatrix of $\mathbf{V}$. The adjacency spectral embedding (ASE) is computed as: 
\begin{align*}
\mathbf{Z}^{ASE}=\mathbf{V}_{d}\mathbf{S}_{d}^{0.5} \in \mathbb{R}^{n \times d},
\end{align*}
where the $i$th row of $\mathbf{Z}^{ASE}$ represents the vertex embedding of vertex $i$. The Laplacian spectral embedding (LSE) follows the same formulation, except the adjacency matrix $\mathbf{A}$ is replaced by the corresponding graph Laplacian $\mathbf{L}$. For detailed discussions on graph spectral embedding and its theoretical properties under random graph models, see \cite{RoheEtAl2011, SussmanEtAl2012, Priebe2019}.

The graph encoder embedding (GEE) is a recent method that can be viewed as a supervised extension of spectral embedding, incorporating label information via a label vector $\mathbf{Y} \in [0,1,\ldots,K]^{n}$. Here, $K$ denotes the total number of classes, and $\mathbf{Y}(i) = 0$ indicates that the label for vertex $i$ is unknown. The method proceeds as follows:
\begin{itemize}
\item \textbf{Class Observation Count}: Compute the number of known labels for each class: \begin{align*}
n_k = \sum_{i=1}^{n} 1(\mathbf{Y}(i)=k)
\end{align*}
for $k = 1, \ldots, K$.
\item \textbf{Normalized One-Hot Encoding Matrix}: Construct a normalized one-hot encoding matrix $\mathbf{W} \in [0,1]^{n \times K}$. For each vertex $i = 1, \ldots, n$, set: 
\begin{align*}
\mathbf{W}(i, k) = 1 / n_k
\end{align*} 
if $\mathbf{Y}(i) = k$, and $0$ otherwise.
\item \textbf{Embedding Computation}: Compute the embedding through a simple matrix multiplication: 
\begin{align*}
\mathbf{Z}=\mathbf{A} \mathbf{W} \in [0,1]^{n \times K}.
\end{align*} 
where $\mathbf{Z}(i, :)$ represents the embedding of vertex $i$.
\end{itemize}

Unlike spectral embedding, which derives a projection matrix via singular value decomposition, GEE directly utilizes the label vector to construct the projection matrix $\mathbf{W}$. Moreover, GEE shares similar theoretical properties with spectral embedding, including convergence to the underlying latent positions. For further details on its theoretical properties and iterative techniques for managing unknown labels, refer to \cite{GEE1, GEEClustering, GEEPrincipalCommunity}.

\section{Refined Graph Encoder Embedding}

\subsection{Linear Transformation for Self-Training}
\label{method}
We first outline the steps for applying linear discriminant analysis (LDA) to transform the graph encoder embedding, enabling self-training on the embedding as follows:

\begin{itemize}
\item \textbf{Input}: The graph adjacency matrix $\mathbf{A} \in \{0,1\}^{n \times n}$ and a label vector $\mathbf{Y} \in \{0,1,\ldots,K\}^{n}$, where $1$ to $K$ represent known labels and $0$ is a dummy category for vertices with unknown labels.
\item \textbf{Graph Encoder Embedding}: Compute the graph encoder embedding $\mathbf{Z}$ as described in Section~\ref{rev1}.
\item \textbf{Linear Transformation}: Apply LDA to $\mathbf{Z}$. Specifically, let $\mu_k \in \mathbb{R}^{K}$ denote the class-conditional mean:
\begin{align*}
\mu_k = E(\mathbf{Z}(i,:) | \mathbf{Y}(i)=k)
\end{align*}
for $k=1,\ldots,K$. Denote $\mu=[\mu_1, \mu_2, \ldots, \mu_K] \in \mathbb{R}^{K \times K}$ as the matrix of concatenated means, and let $\Sigma \in \mathbb{R}^{K \times K}$ be the estimated common covariance matrix of $\mathbf{Z}$. Denote $\Sigma^{+}$ as the pseudo-inverse of $\Sigma$, $\vec{n}_K$ as a row vector where the $k$th entry is $n_k$, and $\mbox{diag}(\cdot)$ as an operator that extracts the diagonal terms of a matrix as a row vector. The linearly transformed GEE embedding is computed as: 
\begin{align}
\label{eq1}
\mathbf{Z}_{1} = \mathbf{Z} \Sigma^{+} \mu- (\mbox{diag}(\mu' \Sigma^{+} \mu) - \mbox{log}(\vec{n}_K / n)).
\end{align}
Here, the term after the first minus sign is a row vector, subtracted from each row of $\mathbf{Z} \Sigma^{+} \mu$. Equation~\ref{eq1} simply represents the linear discriminant analysis expressed as a single matrix equation.
\item \textbf{Self-Training}: Compute the updated label vector $\mathbf{Y}_{1}$ and a mismatch indicator vector $\delta_1$ as follows: 
\begin{align*}
\mathbf{Y}_{1}(i) &= \arg\max_{k=1,\ldots,K} \mathbf{Z}_{1}(i,k)\\
\delta_1(i) &= 1\{\mathbf{Y}_{1}(i) \neq \mathbf{Y}(i) \mbox{ and } \mathbf{Y}(i)>0\},
\end{align*}
where $\delta_1(i)$ indicates whether the original label $\mathbf{Y}(i)$ and the self-trained label $\mathbf{Y}{1}(i)$ are mismatched.
\item \textbf{Output}: The linear transformed encoder embedding $\mathbf{Z}_{1} \in \mathbb{R}^{n \times K}$, the updated label vector $\mathbf{Y}_{1} \in \{1,\ldots,K\}^{n}$, and an indicator vector $\delta_1 \in [0,1]^{n}$.
\end{itemize}

The purpose of GEE with LDA is to align the original GEE such that the dimension attaining the maximum value determines the class assignment for each vertex. While one could use a neural network with softmax outputs for this purpose, the graph encoder embedding is approximately normally distributed, as stated in Theorem~\ref{thm1}. Therefore, LDA is a suitable and more computationally efficient choice for estimating conditional probabilities and aligning the embedding. A pseudocode version of the algorithm is provided in appendix algorithm~\ref{alg1}.

\subsection{Refined GEE via Self-Training and Latent Community Recovery}

We present the main method, which outputs a refined graph encoder embedding along with potential latent community information, through iterative self-training and latent community discovery:

\begin{itemize}
\item \textbf{Input}: The graph adjacency matrix $\mathbf{A} \in \{0,1\}^{n \times n}$ and a label vector $\mathbf{Y} \in \{0,1,\ldots,K\}^{n}$, where $1$ to $K$ represent known labels and $0$ is a dummy category for vertices with unknown labels. Additional parameters include the maximum number of refinements for self-training ($\gamma_{Y}$) and latent community discovery ($\gamma_{K}$), set to $5$ by default; stopping criteria $\epsilon \in [0,1]$ and $\epsilon_n \in \mathbb{N}$, set to $0.3$ and $5$ by default.
\item \textbf{Initial Embedding}: Apply LDA to GEE as in Section~\ref{method}, yielding the vertex embedding $\mathbf{Z}_{1}$, the updated label vector $\mathbf{Y}_{1}$, and an indicator vector $\delta_{1}$. 
\item \textbf{Refinement via Self-Training}: Update the vertex embedding using $\mathbf{A}$ and the new label vector $\mathbf{Y}_{1}$. Denote the outputs as $\mathbf{Z}_{2}$, $\mathbf{Y}_{2}$, and $\delta_2$.
\item \textbf{Stopping Criterion}: Check whether $\delta_{1}$ and $\delta_{2}$ are sufficiently close, defined by:
\begin{align}
\label{close}
\sum_{i=1}^{n}\delta_1(i) - max( \sum_{i=1}^{n}\delta_1(i) * \epsilon, \epsilon_n) < \sum_{i=1}^{n}\delta_1(i) \cdot \delta_2(i).
\end{align}
If the condition holds, proceed to the next step. Otherwise, set $\mathbf{Y}_{1} = \mathbf{Y}_{2}$ and repeat the previous step until the iteration limit $\gamma_{Y}$ is reached. After each iteration, concatenate embeddings and labels:
\begin{align*}
\mathbf{Z}_1=[\mathbf{Z}_1,\mathbf{Z}_{2}], \mathbf{Y}=[\mathbf{Y},\mathbf{Y}_1].
\end{align*}
\item \textbf{Refinement via Latent Communities}: Assign mismatched vertices to a new class, update the vertex embedding using $\mathbf{A}$ and $\mathbf{Y}_{1} + \delta_1 * K$. Denote the outputs as $\mathbf{Z}_{2}$, $\mathbf{Y}_{2}$, and $\delta_2$. Repeat this process until the stopping criterion (Equation~\ref{close}) is satisfied or the iteration limit $\gamma_{K}$ is reached. Concatenate intermediate embeddings and labels as in the previous step. 
\item \textbf{Output}: The refined graph encoder embedding $\mathbf{Z}$ and the concatenated label matrix $\mathbf{Y}$. 
\end{itemize}
As an example of the refinement step, suppose the initial label vector is $\mathbf{Y}_1 = [1, 1, 1, 2, 2, 2]$, with $K = 2$. If the 1st and 4th vertices are identified as mismatched according to $\delta_1$, their labels are updated by assigning each to a new group. The updated label vector becomes $\mathbf{Y}_2 = [3, 1, 1, 4, 2, 2]$, where the new label for a mismatched vertex is determined by incrementing its original label by $K$. This refinement process continues iteratively, assigning new labels to mismatched vertices, until the stopping criterion is satisfied.

The iterative self-training phase terminates when the mismatch between the input and self-trained labels becomes insignificant, as determined by the parameters $\epsilon$ and $\epsilon_n$. Specifically, Equation~\ref{close} requires that the updated mismatch indicator $\delta_2$ yields at least a $30\%$ reduction (or a minimum of $5$ indices) compared to the previous mismatch indicator $\delta_1$. Similarly, during latent community discovery, mismatched vertices are iteratively reassigned to new classes until the mismatch indicator no longer shows a significant reduction. At termination, the method outputs the original encoder embedding along with all refined embeddings from both self-training and latent community assignment, concatenated into a single matrix.

The parameters $\gamma_{K}$ and $\gamma_{Y}$ control the number of iterations, while $\epsilon$ and $\epsilon_n$ determine how aggressive the refinement process is by assessing whether the next refinement is sufficiently similar to the previous one. Larger values of $\epsilon$ and $\epsilon_n$ cause the algorithm to stop more easily, while smaller values make the algorithm harder to stop, leading to more aggressive refinement. For example, if the downstream task involves visualizing all hidden communities or detecting outliers, one can set $\epsilon$ and $\epsilon_n$ to their smallest possible values (e.g., $0$) and $\gamma_{K}$ and $\gamma_{Y}$ to large values (e.g., $100$). This ensures that the refinement process continues as long as mismatched indices keep decreasing. Our default parameter choices are designed to be slightly conservative and have performed well across simulations and real data experiments for vertex classification. A pseudocode version of the main method is provided in appendix algorithm~\ref{alg2}, along with additional simulations exploring parameter sensitivity.

\subsection{Running Time Analysis}
The original graph encoder embedding (GEE) has a time complexity of $O(nK+s)$ \cite{GEE1}, where $s$ is the number of edges, making it linear with respect to the number of vertices and edges. Let $K_{M}$ be the largest possible number of refined classes, the refined encoder embedding (R-GEE) in algorithm~\ref{alg2} has a time complexity of $O(n K_{M} + n K_{M}^{2} + s)$, where the quadratic term $K_{M}^{2}$ comes from using linear discriminant. As $K_{M}=\gamma_{K} K$, or at most $5K$ in the default parameter, the method remains linear with respect to the number of vertices and edges.

Figure~\ref{figA1} shows the running time using simulation model 3 with sparse adjacency matrix input (see Section~\ref{sim} for model details), as $n$ increases from $3000$ to $30000$. The average running time and standard deviation are reported based on 10 Monte Carlo replicates, performed on a local desktop running MATLAB 2024a on Windows 10, equipped with an Intel 16-core CPU and 64GB of memory. It is clear that the R-GEE, although slower than the original GEE, is still vastly faster than singular value decomposition, which is the major computational step of spectral embedding. At $n=30000$, the number of edges is about $50$ million; a single SVD into $d=20$ requires about $200$ seconds, while the graph encoder embedding takes $0.4$ second and the refined method takes $1.2$ second. 

\begin{figure}
\centering
\includegraphics[width=0.6\linewidth,trim={0cm 0cm 0cm 0cm},clip]{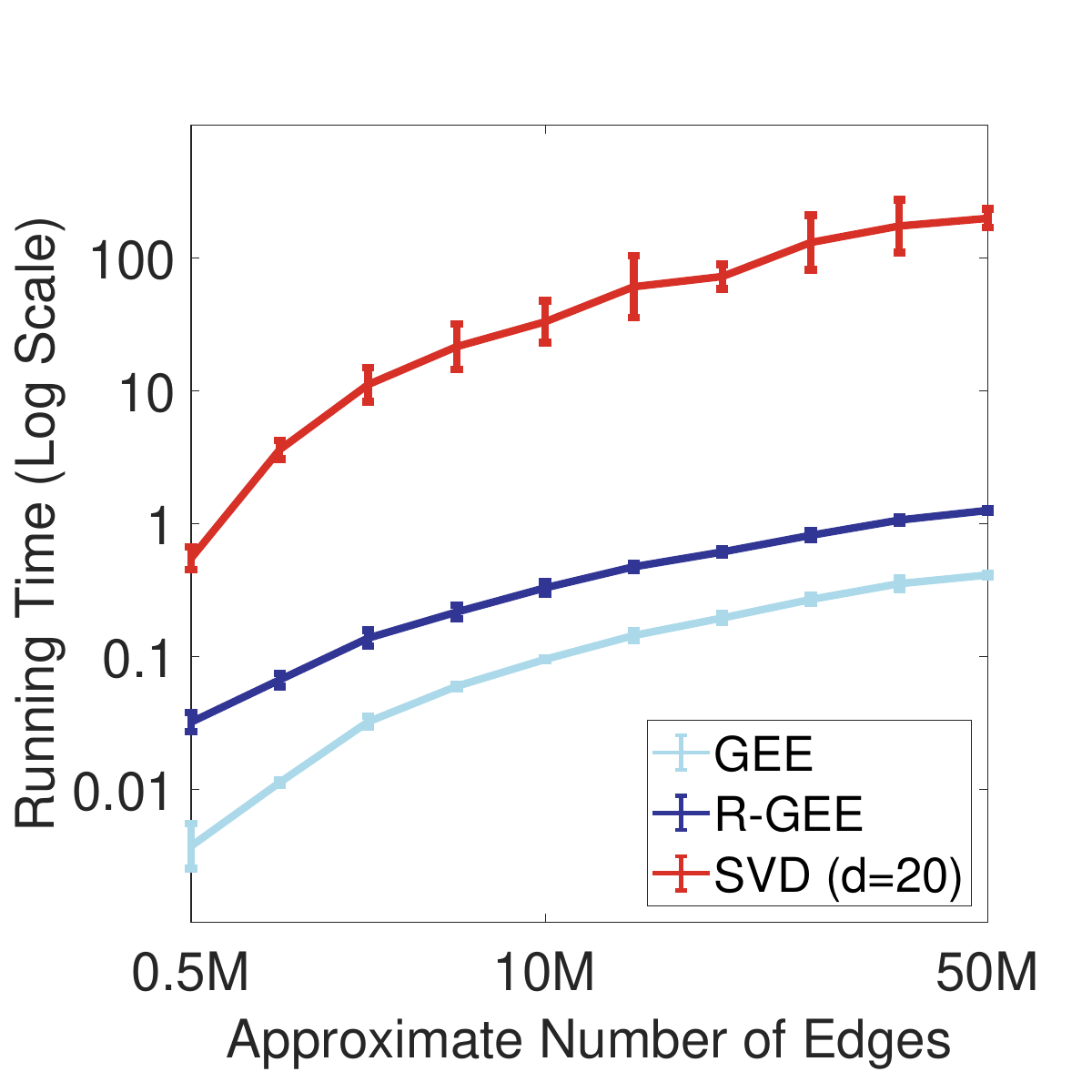}
\caption{This figure shows the running time comparison between GEE, Refined GEE, and SVD. The X-axis represents the approximate number of edges, and the Y-axis represents the running time on a log-10 scale.}
\label{figA1}
\end{figure}

\section{Theoretical Rationale}

In this section, we provide the theoretical rationale for the self-training by linear transformation and the latent community assignment, explaining why the proposed method works and when community refinement helps improve embedding quality. All proofs are provided in the appendix.

\begin{restatable}{theorem}{clt}
\label{thm1}
The graph encoder embedding is asymptotically normally distributed under SBM. Specifically, as $n$ increases, for a given $i$th vertex of class $y$, it holds that
\begin{align*}
\mbox{diag}(\vec{n}_K)^{0.5} \cdot (\mathbf{Z}(i,:) - \mu_{y})  \stackrel{n}{\rightarrow}  \mathcal{N}(0,\Sigma_{y}).
\end{align*}
The expectation and covariance are: $\mu_{y}=\mathbf{B}(y,:)$ and $\Sigma_{y}(k,k)=\mathbf{B}(y,k)(1-\mathbf{B}(y,k))$. Assuming $\Sigma_{y}$ is the same across all $y \in [1,K]$, the transformation in Equation~\ref{eq1} satisfies
\begin{align*}
\mathbf{Z}_{1}(i,k)\stackrel{n}{\rightarrow}  Prob(Y=k| X=\mathbf{Z}(i,:)).
\end{align*}
\end{restatable}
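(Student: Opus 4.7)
The plan is to prove the two assertions of Theorem~\ref{thm1} in turn, each by a classical argument combined with one SBM-specific observation.

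For the asymptotic normality, I would start from the explicit form of the GEE entry
\begin{align*}
\mathbf{Z}(i,k) = \frac{1}{n_k}\sum_{j:\,\mathbf{Y}(j)=k} \mathbf{A}(i,j).
\end{align*}
Conditional on the label vector and on $\mathbf{Y}(i)=y$, each summand is an independent Bernoulli variable with mean $\mathbf{B}(y,k)$ (ignoring the single correction at $j=i$, which is asymptotically negligible). A univariate central limit theorem then gives
\begin{align*}
\sqrt{n_k}\,\bigl(\mathbf{Z}(i,k) - \mathbf{B}(y,k)\bigr) \to \mathcal{N}\bigl(0,\,\mathbf{B}(y,k)(1-\mathbf{B}(y,k))\bigr),
\end{align*}
which identifies both the asymptotic mean $\mu_y = \mathbf{B}(y,:)$ and the coordinatewise variance. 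The key observation that makes the joint covariance matrix $\Sigma_y$ diagonal is that, for fixed $i$, the entries $\mathbf{Z}(i,k)$ and $\mathbf{Z}(i,k')$ with $k \neq k'$ involve disjoint sets of edges $\{\mathbf{A}(i,j) : \mathbf{Y}(j)=k\}$ and $\{\mathbf{A}(i,j) : \mathbf{Y}(j)=k'\}$ and are therefore independent under the SBM. The Cram\'er--Wold device then assembles the coordinate-wise CLTs into the joint multivariate convergence with the diagonal $\Sigma_y$ stated in the theorem.

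For the second part, I would invoke the standard Gaussian--Bayes calculation. Under the common covariance assumption $\Sigma_y \equiv \Sigma$, treating $X = \mathbf{Z}(i,:)$ as the approximately Gaussian random vector produced by part one with prior $\pi_k = n_k/n$, Bayes' rule yields
\begin{align*}
\log \mathrm{Prob}(Y=k \mid X=x) = x^{T}\Sigma^{+}\mu_k - \tfrac{1}{2}\mu_k^{T}\Sigma^{+}\mu_k + \log(n_k/n) + C(x),
\end{align*}
where $C(x)$ does not depend on $k$ and is the log normalizer of the posterior. Matching this expression against Equation~\ref{eq1} identifies $\mathbf{Z}_1(i,k)$ as the LDA discriminant, which agrees with $\log \mathrm{Prob}(Y=k \mid X=x)$ up to the $k$-independent term $C(x)$; exponentiating and normalizing across $k$ then recovers the stated posterior probability.

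I anticipate two main technical subtleties. First, Equation~\ref{eq1} uses the pseudo-inverse $\Sigma^{+}$ in place of $\Sigma^{-1}$; this is justified because the asymptotic support of $\mathbf{Z}(i,:) - \mu_y$ lies in the row space of $\Sigma$, so $\Sigma^{+}$ acts as a genuine inverse on the relevant subspace and the Gaussian--Bayes calculation carries through verbatim there. Second, in practice $\mu$ and $\Sigma$ in Equation~\ref{eq1} are plug-in sample estimates formed from $\mathbf{Z}$ and $\mathbf{Y}$, so transferring the Gaussian--Bayes identity to the sample-based $\mathbf{Z}_1$ requires a continuous-mapping argument combined with $\sqrt{n_k}$-consistency of $(\hat{\mu}, \hat{\Sigma})$. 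This sample-estimate bookkeeping, rather than either the CLT or the Bayes derivation itself, is what I expect to be the main routine obstacle in a complete proof.
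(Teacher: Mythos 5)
Your proposal follows essentially the same route as the paper's own proof: a coordinatewise Bernoulli-sum CLT combined with the observation that distinct coordinates of $\mathbf{Z}(i,:)$ use disjoint edge sets (hence a diagonal $\Sigma_y$), followed by the standard Gaussian--Bayes/LDA identification for the second claim. You are in fact somewhat more explicit than the paper on the assembly into a joint limit (Cram\'er--Wold), on the role of the pseudo-inverse, and on the plug-in-estimate issue, but these are refinements of the same argument rather than a different approach.
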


Theorem~\ref{thm1} shows that the original graph encoder embedding is asymptotically normally distributed. As a result, the proposed linear transformation approximates the conditional probability, making it an appropriate choice for subsequent self-training.

\begin{restatable}{theorem}{refine}
\label{thm2}
Suppose the graph is distributed as the stochastic block model with block probability $\mathbf{B} \in \mathbb{R}^{K \times K}$ and observed label vector $\mathbf{Y} \in [1,\ldots,K]$. Then for any two vertices $i, j$, the encoder embedding $\mathbf{Z}$ using observed labels satisfies:
\begin{align*}
\| \mathbf{Z}(i,:) - \mathbf{Z}(j,:) \|_2 - \|\mathbf{B}(\mathbf{Y}(i),:) - \mathbf{B}(\mathbf{Y}(j),:)\|_2 \stackrel{n}{\rightarrow} 0
\end{align*}

Suppose the same graph can be viewed as a realization of a latent stochastic block model with $\mathbf{B}_{0} \in \mathbb{R}^{K_0 \times K_0}$ and a latent label vector $\mathbf{Y}_{0} \in [1,\ldots,K_0]$ where $K_0 >K$. The for the same two vertices $i,j$, the resulting encoder embedding $\mathbf{Z}_0$ using the latent labels satisfies:
\begin{align*}
\| \mathbf{Z}_0(i,:) - \mathbf{Z}_0(j,:) \|_2 - \|\mathbf{B}_0(\mathbf{Y}_{0}(i),:) - \mathbf{B}_{0}(\mathbf{Y}_{0}(j),:)\|_2 \stackrel{n}{\rightarrow} 0
\end{align*}
\end{restatable}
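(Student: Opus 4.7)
The plan is to reduce both claims to the coordinatewise concentration already guaranteed by Theorem~\ref{thm1}, and then lift to the Euclidean norm via a triangle-inequality argument. Recall that $\mathbf{Z}(i,k) = \frac{1}{n_k}\sum_{j:\,\mathbf{Y}(j)=k}\mathbf{A}(i,j)$ is, up to the single self-loop term (whose contribution is $O(1/n_k)$), the empirical mean of $n_k$ independent Bernoulli$(\mathbf{B}(\mathbf{Y}(i),k))$ variables. Theorem~\ref{thm1} asserts that $\sqrt{n_k}\,(\mathbf{Z}(i,k) - \mathbf{B}(\mathbf{Y}(i),k))$ is asymptotically Gaussian, so under the standard regime $n_k \to \infty$ for every $k$, each coordinate satisfies $\mathbf{Z}(i,k) - \mathbf{B}(\mathbf{Y}(i),k) \stackrel{n}{\rightarrow} 0$ in probability. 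Summing over the $K$ coordinates yields $\|\mathbf{Z}(i,:) - \mathbf{B}(\mathbf{Y}(i),:)\|_2 \stackrel{n}{\rightarrow} 0$.

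Next, I would apply the reverse triangle inequality: for any two vertices $i,j$,
\[\bigl|\|\mathbf{Z}(i,:) - \mathbf{Z}(j,:)\|_2 - \|\mathbf{B}(\mathbf{Y}(i),:) - \mathbf{B}(\mathbf{Y}(j),:)\|_2\bigr| \le \|\mathbf{Z}(i,:) - \mathbf{B}(\mathbf{Y}(i),:)\|_2 + \|\mathbf{Z}(j,:) - \mathbf{B}(\mathbf{Y}(j),:)\|_2.\]
Both terms on the right vanish in probability by the previous step, which establishes the first claim.

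The second claim follows by re-running the same two steps verbatim after relabeling. The key observation is that the embedding $\mathbf{Z}_0 = \mathbf{A}\mathbf{W}_0$ is built from the $n \times K_0$ normalized one-hot matrix induced by $\mathbf{Y}_0$, and the adjacency entries are Bernoulli with the correct parameters $\mathbf{B}_0(\mathbf{Y}_0(i),\mathbf{Y}_0(j))$ when the same graph is viewed as a realization of the latent SBM $(\mathbf{B}_0,\mathbf{Y}_0)$. Theorem~\ref{thm1} is agnostic to which labeling is fed into the algorithm, so it yields $\mathbf{Z}_0(i,k) - \mathbf{B}_0(\mathbf{Y}_0(i),k) \stackrel{n}{\rightarrow} 0$ for every $k \in [1,K_0]$, and the same triangle-inequality bound with subscript $0$ throughout closes the argument.

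The only subtlety I foresee is ensuring that each per-class sample size $n_k^{(0)} = \sum_i \mathbf{1}\{\mathbf{Y}_0(i) = k\}$ diverges as $n \to \infty$, since a latent class of bounded size would not have its corresponding coordinate of $\mathbf{Z}_0$ concentrate at the SBM rate predicted by Theorem~\ref{thm1}. This is the standard balanced-growth condition implicit in any asymptotic SBM analysis and should be stated explicitly. Beyond this mild condition, the argument is a direct corollary of Theorem~\ref{thm1} combined with continuous mapping of $\|\cdot\|_2$, so I do not expect any genuinely difficult step.
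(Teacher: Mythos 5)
Your proposal is correct and follows essentially the same route as the paper's proof: both reduce the claim to the law-of-large-numbers concentration $\|\mathbf{Z}(i,:)-\mathbf{B}(\mathbf{Y}(i),:)\|_2 \rightarrow 0$ inherited from Theorem~\ref{thm1} and then apply the (reverse) triangle inequality, repeating the argument verbatim under the latent labeling. Your explicit use of the two-sided reverse triangle inequality and your flagging of the condition that each latent class size $n_k^{(0)}$ must diverge are minor tightenings of the paper's presentation, not a different argument.
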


Theorem~\ref{thm2} suggests that when comparing the encoder embedding using observed labels versus the encoder embedding using latent labels, the margin of separation fully depends on the block probability vector between the observed model and the latent model. This means that, from a margin separation perspective, the refined encoder embedding using latent communities could perform better or worse than the original encoder embedding using observed communities. Therefore, for the refined GEE to improve over the original GEE, it needs to properly decide whether to refine the given labels or not. Moreover, it is important to concatenate the embedding in each refinement, because the concatenated embedding retains previous embedding information and is more robust against slight over-refinement.

Under the stochastic block model, this theorem can help verify whether the latent community leads to an improvement or deterioration in the margin of separation over the observed community. Note that the theorem focuses on asymptotic behavior. In finite-sample performance, the embedding variance certainly plays a role in the decision boundary. In this paper, we only considered the mean difference to illustrate the key idea, for simplicity of presentation and to avoid overly complicating mathematical expressions. This is because the variance is generally similar across the groups and bounded above by $0.25$ in SBM.

\section{Simulations}
\label{sim}
We start with three stochastic block models, each serving as a representative case, and use Theorem~\ref{thm2} to verify whether the latent community leads to better embedding separation among groups. We then use embedding visualization, vertex classification, and precision/recall metrics to verify the results and assess the effectiveness of the refined algorithm.

\subsection{Model Parameters}

\subsubsection*{Simulated Graph 1}
For each vertex, we let the latent communities as $Y_{0} \in \{1,2,3,4\}$ with probability $0.25$ each, set the latent block probability matrix as
\begin{align*}
\mathbf{B}_{0}=\begin{bmatrix}
0.5, 0.2, 0.1, 0.1 \\
0.2, 0.2, 0.1, 0.1 \\
0.1, 0.1, 0.2, 0.2 \\
0.1, 0.1, 0.2, 0.5
\end{bmatrix},
\end{align*}
and generate the degree parameter by $\theta_i \stackrel{i.i.d.}{\sim} Uniform(0.1,1)$. Next, we set the observed communities as $Y=1$ if $Y_{0} \in \{1,2\}$, and $Y=2$ if $Y_{0} \in \{3,4\}$. Namely, the first two latent communities are observed as one group, while the last two latent communities are observed as another group. Therefore, the observed block probability matrix can be computed as
\begin{align*}
\mathbf{B}=\begin{bmatrix}
0.275, 0.1 \\
0.1, 0.275 
\end{bmatrix}.
\end{align*}
Now we use Theorem~\ref{thm2} to check the margin of separation. When using the latent labels, the margin of separation between classes $2$ and $3$ equals $\|(0.2, 0.2, 0.1, 0.1) - (0.1, 0.1, 0.2, 0.2)\|=0.2$. When using the observed labels, the difference is $\|(0.275, 0.1) - (0.1, 0.275)\| =0.25$. Therefore, using the observed labels actually provides a larger margin of separation between these vertices. Note that if we consider the separation between classes $1$ and $4$, then the latent communities are better; however, those two latent groups are less important than the separation between latent classes $2$ and $3$.

\subsubsection*{Simulated Graph 2}

The latent communities and block probability matrix are exactly the same as in simulated graph 1,  except the observed communities are set up as follows: $Y=1$ if $Y_{0} \in \{1,3\}$, and $Y=2$ if $Y_{0} \in \{2,4\}$. As a result, the observed block probability matrix can be computed as
\begin{align*}
\mathbf{B}=\begin{bmatrix}
0.225, 0.15 \\
0.15, 0.225 
\end{bmatrix}.
\end{align*}
In this case, the latent communities have a margin of $0.2$ between latent class 2 and 3, which becomes $0.11$ when using observed communities. Therefore, this simulation provides an example where label refinement is necessary and significantly improves the embedding quality.

\subsubsection*{Simulated Graph 3}
In this simulation, we set the latent communities as $Y_{0} \in \{1,2,3,4,5\}$ with probability $0.2$ each, and set the latent block probability matrix as
\begin{align*}
\mathbf{B}_{0}=\begin{bmatrix}
0.5, 0.2, 0.2, 0.1, 0.1 \\
0.1, 0.2, 0.1, 0.2, 0.1 \\
0.1, 0.1, 0.2, 0.1, 0.2 \\
0.1, 0.2, 0.1, 0.5, 0.1 \\
0.1, 0.1, 0.2, 0.1, 0.5
\end{bmatrix}.
\end{align*}
The observed communities are: $Y=1$ if $Y_{0}\in \{1,2,3\}$; $Y=2$ if $Y_{0}=4$; and $Y=3$ if $Y_{0}=5$. Then the observed block probability matrix can be computed as
\begin{align*}
\mathbf{B}=\begin{bmatrix}
0.178, 0.133, 0.133\\
0.133, 0.500, 0.100 \\
0.133, 0.100, 0.500 
\end{bmatrix}
\end{align*}
This simulation is somewhat similar to simulated graph 1 but presents an interesting mixed situation where some decision boundaries are improved using latent groups, while others are worse. For example, the difference between vertices in latent group 2 and 4 is $0.3$ using latent labels, which is enlarged to $0.37$ using observed labels. However, class 1 and 4 are separated by $0.5745$ using latent labels, which is reduced to $0.37$ using observed labels, and similarly for class $1$ versus $5$, or class $4$ versus $5$. 

\subsection{Latent Community Recovery}

Figure~\ref{fig1} shows the graph connectivity for simulated graphs 1 and 2, with vertices colored by latent community (left), observed community (center), and GEE-refined community (right), which uses the label output via one label refinement of R-GEE ($\gamma_{K}=1$ and $\gamma_{Y}=0$).

Since the latent communities do not improve the embedding separation for simulated graph 1, we expect R-GEE to largely ignore the latent communities. This is indeed the case in the top row of Figure~\ref{fig1}, where the refinement only highlights a few vertices in the middle, and most vertices remain in their observed groups.

The situation is different for simulated graph 2, where the latent communities significantly improve the embedding quality and decision boundary. In this case, R-GEE successfully identifies the latent communities, assigning most vertices in latent communities 2 and 3 to different groups, so the right panel closely matches the left panel in the bottom row of Figure~\ref{fig1}. To maintain a clear and consistent visualization, simulation graph 3 is not shown here, as it merely represents a mixed case between graph 1 and graph 2.

\begin{figure}
\centering
\includegraphics[width=1.0\linewidth,trim={2cm 0cm 2cm -0.2cm},clip]{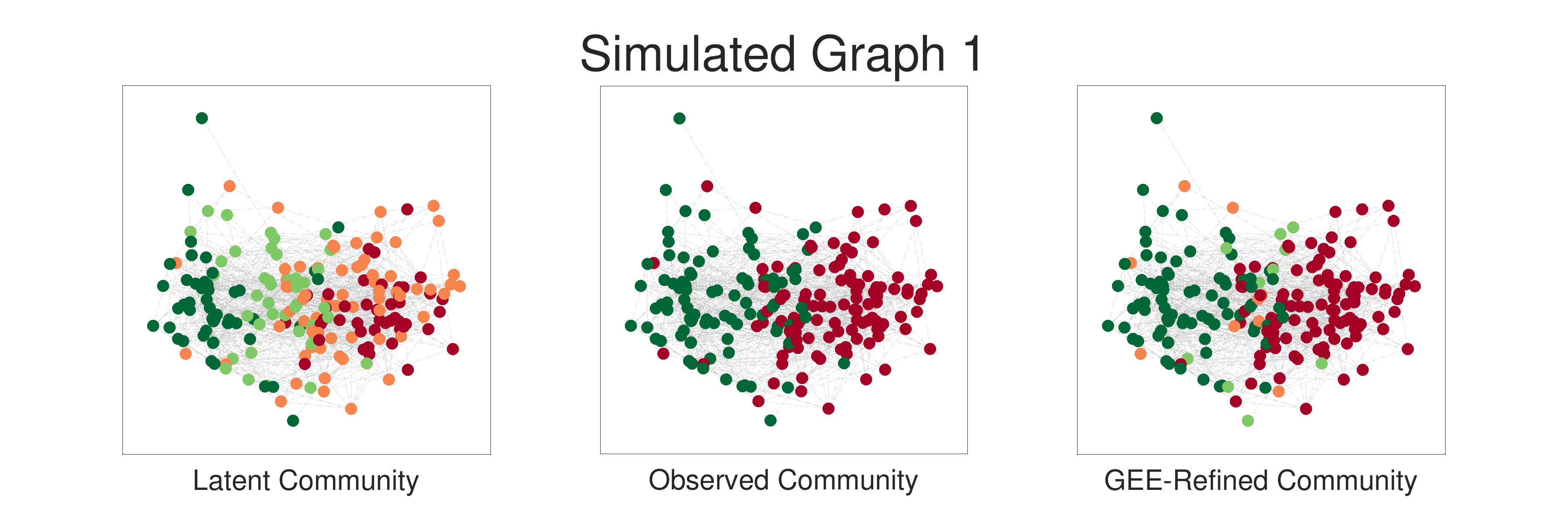}
\includegraphics[width=1.0\linewidth,trim={2cm 0cm 2cm -0.2cm},clip]{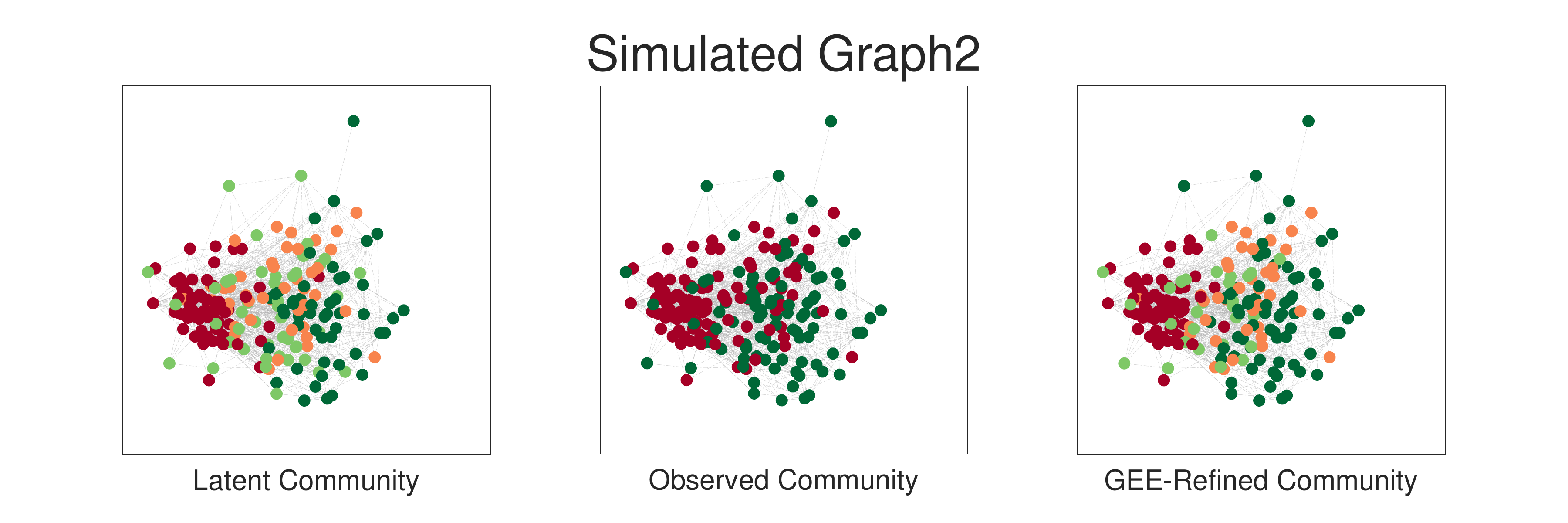}
\caption{This figure visualizes the graph using latent labels (left panel), observed labels (center panel), and GEE-refined labels after one refinement iteration (right panel). In the left panel, the colors dark green, light green, red, and orange represent the four ground-truth latent communities. In the center panel, the dark green and light green vertices from the left panel are combined into a single observed community, colored dark green, while the red and orange vertices are similarly merged into a single observed community, colored red. In the right panel, the R-GEE algorithm refines the observed labels from the center panel and partially recovers the ground-truth latent communities, with the refined communities again represented by light green and orange.}
\label{fig1}
\end{figure}

\subsection{Vertex Classification Evaluation}

The top row of Figure~\ref{fig2} reports the 10-fold cross-validation for the simulated graphs using 30 replicates. For each replicate, we generate a simulated graph of increasing vertex size, along with the corresponding latent and observed labels.

GEE0 computes the original GEE using the latent community labels $Y_0$. GEE computes the original GEE using the observed community labels $Y$. R-GEE uses the proposed algorithm with default parameters and the observed community labels $Y$ as input. ASE stands for adjacency spectral embedding into $d=20$. Each method is then evaluated via an LDA classifier for the observed labels $Y$. Note that the classification task is always for the observed labels $Y$, and the latent labels $Y_0$ are only used for embedding. Moreover, for all GEE methods, the labels of testing observations are assigned to $0$ prior to the embedding.

For simulated graph 1, GEE0 using the latent labels has the worst classification error, while all other methods perform well and similarly to each other. This result matches the model setting and our previous verification that latent communities yield worse embedding quality. For simulated graph 2, GEE using the observed labels performed the worst, while R-GEE, GEE0, and ASE all performed very well. This is a reversal of simulated graph 1 and also matches the model setting and previous verification that latent communities improve the embedding quality in this case. For simulated graph 3, it is a mixed case where some refinement helps marginally, and GEE0, GEE, and R-GEE all performed relatively well with some small differences.

The bottom row of Figure~\ref{fig2} shows the precision and recall of R-GEE. In simulated graphs 1 and 3, the vertex classification results indicate that latent communities are not important, so while the precision is high (all discovered new communities belong to the true latent communities), the recall is relatively low (many vertices from the latent communities are not discovered). For simulated graph 2, discovering the latent communities is critical, and indeed both precision and recall are very high, showing that R-GEE is performing as intended.

Overall, this figure shows that the proposed algorithm works as designed, recovering latent communities only when they are useful for vertex classification, and retaining excellent embedding quality that is not overly refined, as evidenced by the good classification error that converges to $0$ in every case.

\begin{figure}
\centering
\includegraphics[width=1.0\linewidth,trim={1cm 0cm 2cm -0.2cm},clip]{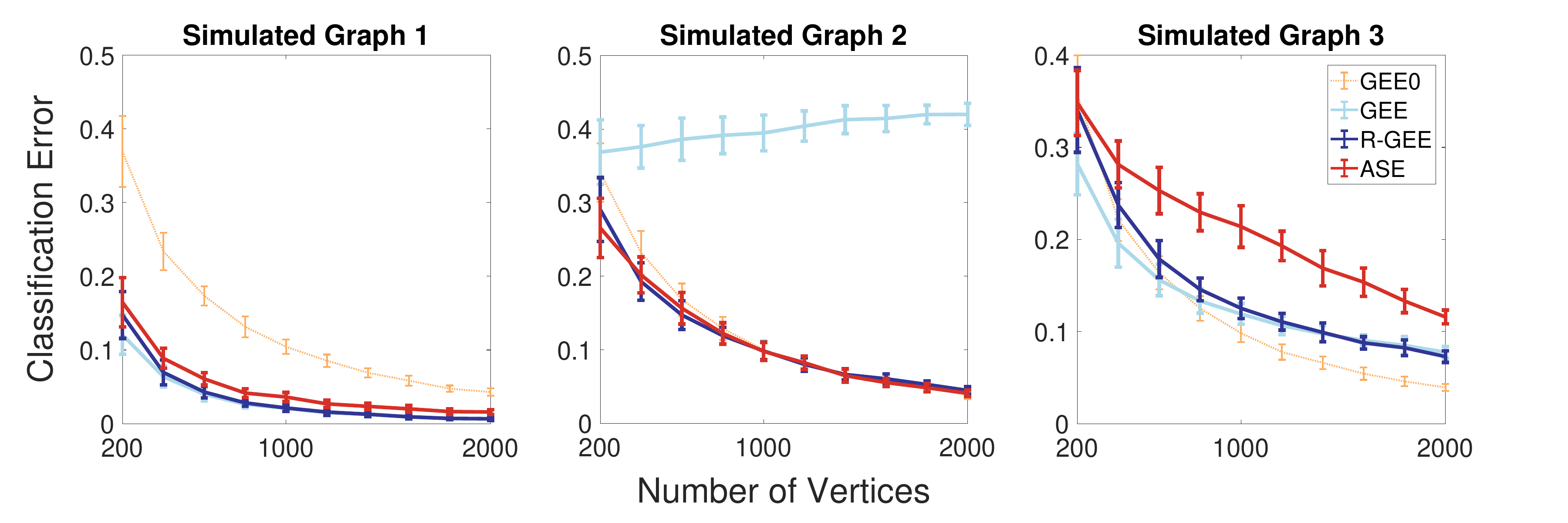}
\includegraphics[width=1.0\linewidth,trim={1cm 0cm 2cm -0.2cm},clip]{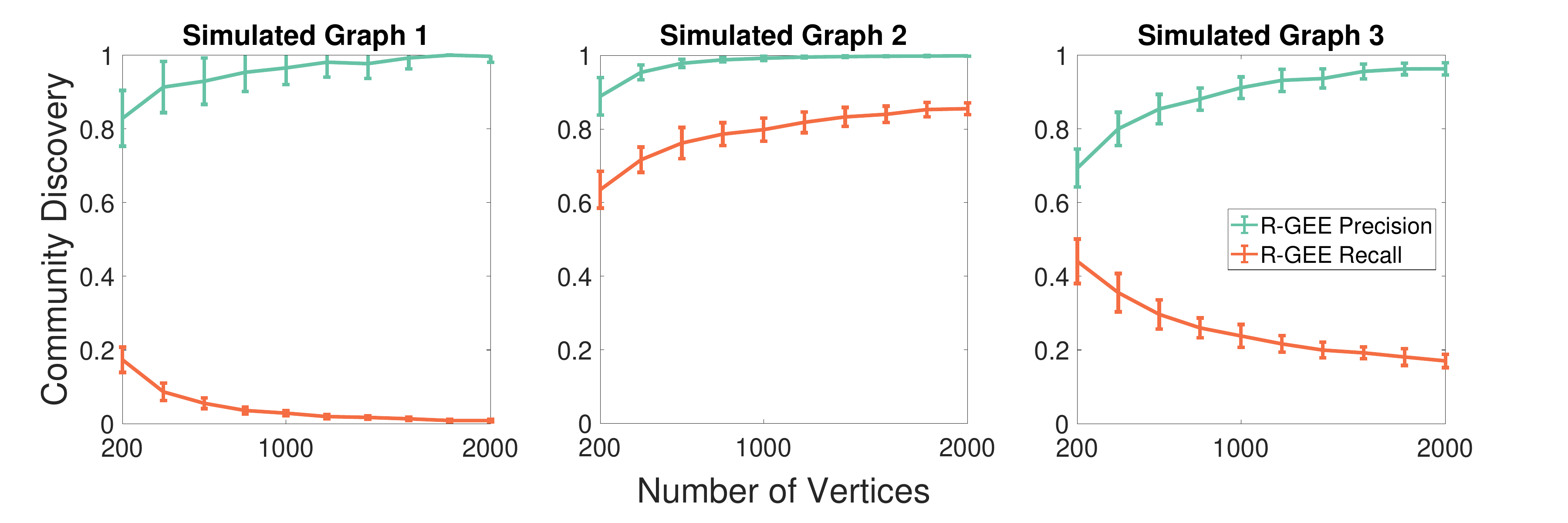}
\caption{The first row of the figure reports the 10-fold cross-validation error and standard deviation for the three simulated graphs, using 30 replicates. The bottom row of the figure reports the precision and recall for refined GEE in recovering the latent communities.}
\label{fig2}
\end{figure}

\section{Real Data Evaluation}
\label{secReal}

We collected a diverse set of real graphs with associated labels from various sources, including the Network Repository\footnote{\url{http://networkrepository.com/}} \cite{nr}, Stanford network data\footnote{\url{https://snap.stanford.edu/}}, and other public graph data. Specifically, we experimented on the adjnoun graph \cite{newman2003structure}, C. elegans neuron data, which provides two binary graphs \cite{PavlovicEtAl2014}, the EU email network \cite{Leskovec2017}, the karate club graph \cite{zachary1977information}, the lastFM Asia social network \cite{lastfm}, the letter graph, the political blogs graph \cite{AdamicGlance2005}, a political retweet graph, the Pubmed Citation network, and a Wikipedia article graph \cite{GCCAJMVA} with four graphs.

\subsection{Vertex Classification}
For a more comprehensive evaluation in the real data experiments, we compared GEE, R-GEE, ASE, LSE, and node2vec. R-GEE used the default parameters; ASE and LSE project into $d=20$ dimensions; node2vec uses the graspy package \cite{Graspy} with default parameters and 128 dimensions. For each dataset and each method, we carried out 10-fold validation and reported the average classification error using LDA, along with one standard deviation, in Table~\ref{table1} with 30 random seeds. Any directed graph was transformed to undirected, and any singleton vertex was removed. Note that unlike the simulated graphs, real graphs do not come with any known latent communities.

Table~\ref{table1} clearly shows that refined GEE is able to preserve or improve the classification error compared to original GEE. In a few cases where it is worse, the difference is only marginal. Moreover, GEE and R-GEE are either the best or very close to the best in terms of classification error across all real data experiments. It should be noted that all methods with parameters could attain better performance if we tuned the parameters for each real dataset, but we chose to use consistent parameter choices throughout the experiments. Therefore, the results reported here should be viewed as a conservative illustration of the proposed method, not the best possible error.


\begin{table}
\renewcommand{\arraystretch}{1.3}
\centering
\small
{\begin{tabular}{|c|c||c|c|c|c|c|}
 \hline
 & (n, K) & R-GEE & GEE & ASE & LSE & N2V\\
\hline
AdjNoun & (112, 2) & $14.9 \pm 2.1$ & $14.6 \pm 2.0$  & $18.7 \pm 1.3$ & $\textbf{14.2} \pm 1.4$ & $45.9 \pm 3.3$ \\
C-Elegans Ac & (253, 3)  & $37.1 \pm 2.7$ & $49.7 \pm 1.5$  & $38.4 \pm 1.3$ & $\textbf{35.3} \pm 1.3$ & $45.3 \pm 2.0$ \\
C-Elegans Ag & (253, 3) & $\textbf{38.0}\pm 2.3$ & $40.8 \pm 1.7$ & $42.3 \pm 1.1$ & $42.6 \pm 1.2$  & $40.2 \pm 2.3$ \\
Coil-RAG & (11687, 100) & $\textbf{19.5} \pm 1.1$ & $\textbf{19.5} \pm 1.1$ & $97.3 \pm 0.1$ & $95.5 \pm 0.1$  & $79.1 \pm 0.2$ \\
Email & (1005, 42) & $29.4 \pm 1.1$ & $33.1 \pm 0.5$ & $44.4 \pm 0.4$ & $88.5 \pm 0.4$  & $\textbf{28.9} \pm 0.4$ \\
Karate & (34, 2) & $\textbf{9.5} \pm 2.3$ & $\textbf{9.5} \pm 2.3$ & $17.4 \pm 4.6$ & $16.6 \pm 4.2$  & $13.8 \pm 2.7$ \\
LastFM & (7624, 18) & $17.5 \pm 0.5$ & $18.1 \pm 0.2$ & $48.9 \pm 0.1$ & $23.1 \pm 0.1$ & $\textbf{14.7} \pm 0.1$ \\
Letter & (10482, 26) & $\textbf{3.2} \pm 0.2$ & $\textbf{3.2} \pm 0.2$ & $89.9 \pm 0.2$ & $88.9 \pm 0.2$  & $74.7 \pm 0.2$ \\
PolBlogs & (1224, 2) & $5.3 \pm 0.3$ & $5.0 \pm 0.3$  & $9.5 \pm 0.2$ & $\textbf{4.9} \pm 0.2$ & $5.1 \pm 0.1$ \\
PolTweet & (1847, 2) & $\textbf{2.6} \pm 0.1$ & $2.7 \pm 0.1$  & $29.8 \pm 0.1$ & $4.6 \pm 0.1$ & $38.8 \pm 0.1$ \\
PubMed & (19716, 3) & $\textbf{20.3} \pm 0.1$  & $20.4 \pm 0.1$ & $37.4 \pm 0.1$ & $34.0 \pm 0.1$ & $58.8 \pm 0.2$ \\
Wiki TE & (1382, 5) & $\textbf{16.0} \pm 0.5$ & $20.4 \pm 0.3$ & $26.2 \pm 0.2$ & $26.6 \pm 0.2$  & n/a\\
Wiki TF & (1382, 5) & $\textbf{15.8} \pm 0.5$ & $20.9 \pm 0.3$ & $27.7 \pm 0.3$ & $27.7 \pm 0.3$  & n/a\\
Wiki GE & (1382, 5) & $\textbf{33.5} \pm 0.9$ & $41.2 \pm 0.5$ & $46.4 \pm 0.4$ & $53.8 \pm 0.4$  & $40.2 \pm 0.5$ \\
Wiki GF & (1382, 5) & $\textbf{42.9} \pm 0.8$ & $50.6 \pm 0.8$ & $47.1 \pm 0.3$ & $56.7 \pm 0.5$  & $46.6 \pm 0.4$ \\
\hline
\end{tabular}
\caption{This table reports the 10-fold vertex classification error and standard deviation for real graphs, using 30 random replicates. All numbers are in percentile. Note that there are two text dissimilarity datasets (Wiki TE and Wiki TF, which are cosine dissimilarity of the underlying articles) where node2vec is not applicable.}
\label{table1}
}
\end{table}

\subsection{Refinement Visualization on Real Data}
\label{appC}
Figure~\ref{fig3} illustrates the community refinement results for two representative cases: the karate club graph and the political blogs graph. R-GEE successfully detects useful hidden communities with just one refinement: for the karate club graph, the algorithm identifies an anomaly vertex that always connects with the other group, and another vertex located at the intersection between the two classes; for the political blogs, our algorithm identifies blogs that are dominantly connected to the other party. Whether these are "swinger" blogs or "imposter" blogs is an issue of practical importance.

While the refined embedding is undoubtedly useful for uncovering potential hidden communities, it is important to note that these latent communities may not always enhance the performance of downstream tasks. For example, in the karate club dataset, Figure~\ref{fig3} clearly identifies two vertices that should form their own community. However, as shown in Table~\ref{table1}, this refinement does not improve vertex classification performance when evaluated against the original class assignments. A similar situation occurs with the political blogs dataset. Although it is visually evident that some vertices should be refined into separate groups, the refinement does not lead to better vertex classification performance. Since the classification benchmark used the original observed labels, we believe this is because the refinement does not always contribute to improved separation of those original labels. 

\begin{figure}
\centering
\includegraphics[width=0.8\linewidth,trim={2cm 0cm 2cm -0.2cm},clip]{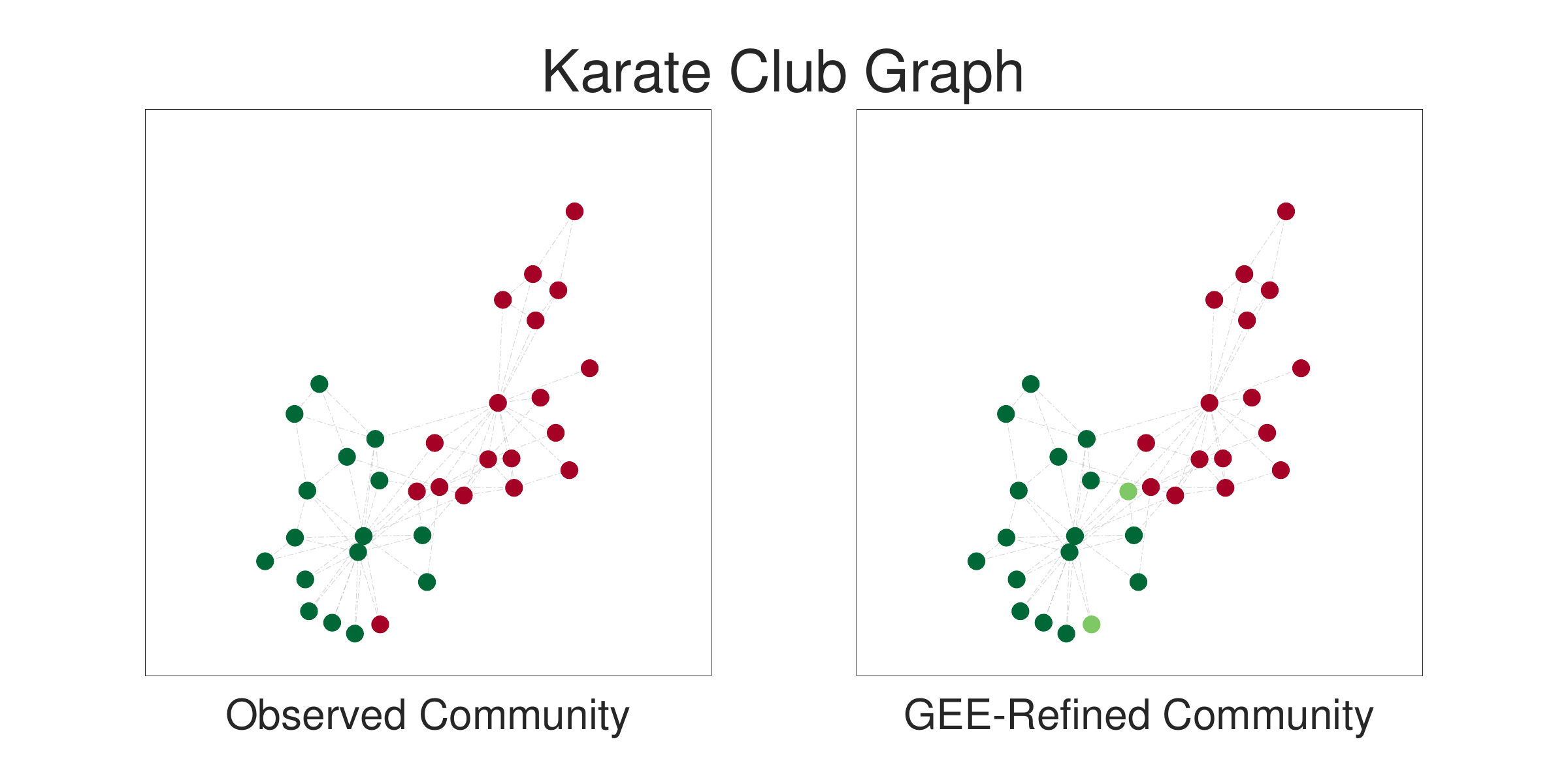}
\includegraphics[width=0.8\linewidth,trim={2cm 0cm 2cm -0.2cm},clip]{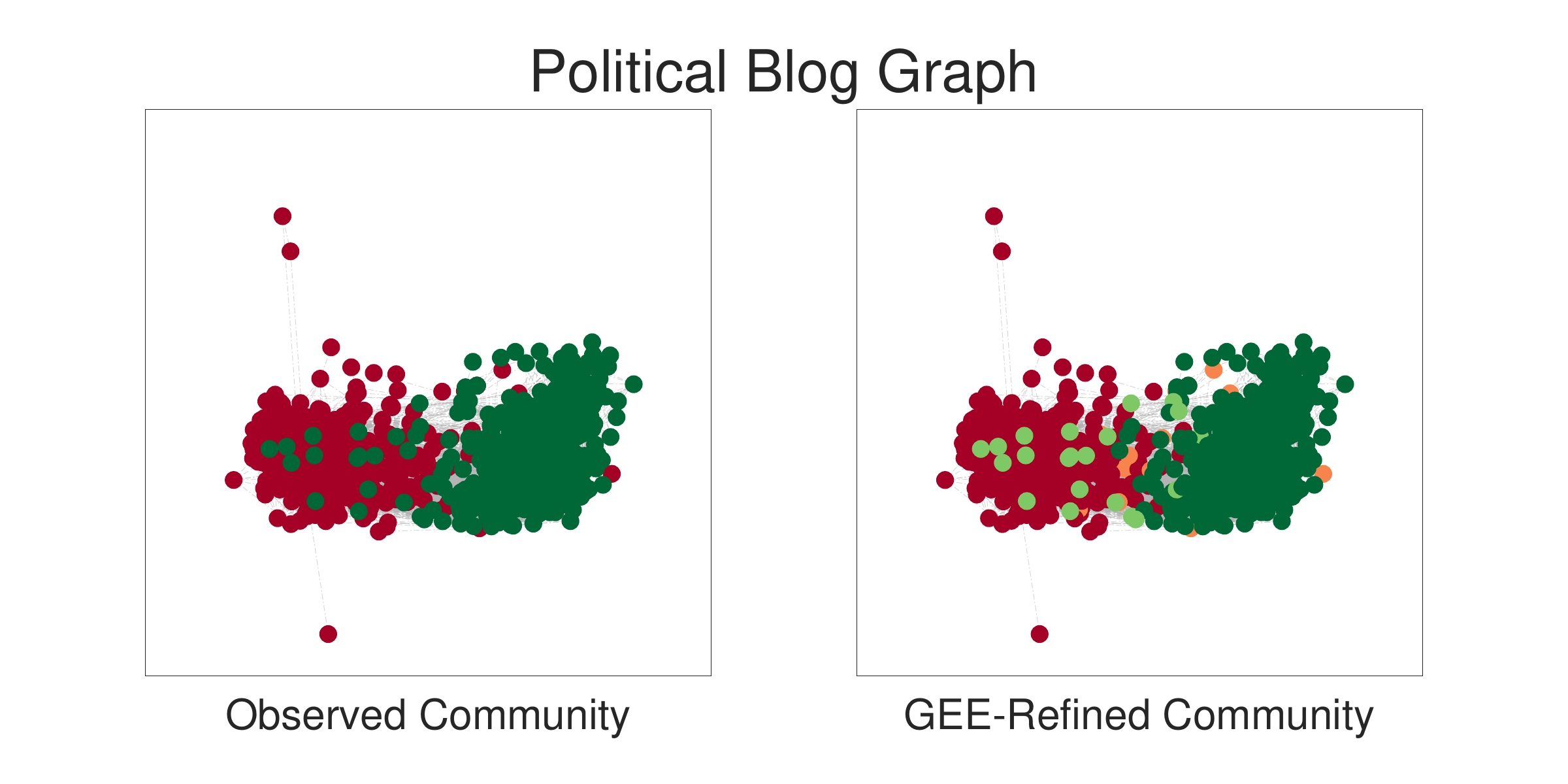}
\caption{This figure visualizes two real graphs, the karate club and political blogs, using observed labels (left panel) and GEE-refined labels after one refinement iteration (right panel). In the left panel, the graphs are drawn with vertices colored by observed labels: dark green and red. In the right panel, the same graphs are shown with vertex colors representing refined classes. For the karate club graph, two vertices from the dark green community are refined into a new group, colored light green. For the political blogs graph, some vertices in the dark green group are refined into a new group, also colored light green, while some vertices in the red group are refined into another new group, colored orange.}
\label{fig3}
\end{figure}

\subsection{Vertex Classification on Multiplex Graphs}
\label{appD}
Some of the real data, specifically the C. elegans data and the Wikipedia data, come with multiple graphs of a common vertex set. The graph encoder embedding can be directly used for multiple graph inputs \cite{GEEFusion} by concatenating the embeddings, as can the refined version. LDA classifier can then be applied to measure the quality of the joint embedding via vertex classification.

For the spectral embedding, we use the unfolded spectral embedding \cite{Patrick2022a}: Given $M$ graphs of matched common vertices, the unfolded version concatenates all adjacency matrices by rows into $\mathbf{A} \in \mathbb{R}^{n \times Mn}$, and applies SVD to yield $\mathbf{Z}^{UASE}=\mathbf{V}_{d}\mathbf{S}_{d}^{0.5} \in \mathbb{R}^{Mn \times d}$, where each $n \times d$ matrix is the embedding for the corresponding graph. We then reshape and concatenate the embedding into $\mathbb{R}^{n \times Md}$ and carry out the vertex classification using linear discriminant analysis.

For node2vec, we simply apply node2vec to each graph, concatenate their embeddings, and apply LDA. Everything else is exactly the same as in Section~\ref{secReal}, and Table~\ref{table2} reports the average vertex classification error and the standard deviation for different combinations of the graph data. The results are consistent with those in Table~\ref{table1}, where refined GEE always improves over the original GEE, and is the best performer throughout all combinations. Additionally, using multiple graphs improves over single graph results.

\begin{table}
\renewcommand{\arraystretch}{1.3}
\centering
\small
{\begin{tabular}{|c||c|c|c|c|c|}
 \hline
 & R-GEE & GEE & U-ASE & U-LSE & N2V\\
\hline
C-Elegans Ac+Ag & $\textbf{33.7} \pm 2.2$ & $40.2 \pm 1.7$  & $34.7 \pm 1.3$ & $40.2 \pm 1.3$ & $51.9 \pm 2.8$\\
Wiki TE+TF& $\textbf{14.6} \pm 0.5$ & $18.0 \pm 0.3$ & $20.7 \pm 0.3$ & $21.1 \pm 0.3$ & n/a\\ 
Wiki TE+GE & $\textbf{14.6} \pm 0.6$ & $17.8 \pm 0.4$ & $21.2 \pm 0.3$ & $30.2 \pm 0.3$  & n/a\\
Wiki TF+GF & $\textbf{15.7} \pm 0.5$ & $18.7 \pm 0.3$ & $21.1 \pm 0.3$ & $31.3 \pm 0.4$  & n/a\\
Wiki GE+GF & $\textbf{32.2} \pm 0.8$ & $39.2 \pm 0.8$ & $43.7 \pm 0.3$ & $50.8 \pm 0.5$  & $39.9 \pm 0.7$\\
Wiki TE+TF+GE+GF & $\textbf{13.3} \pm 0.5$ & $16.1 \pm 0.3$ & $18.0 \pm 0.3$ & $27.9 \pm 0.4$ & n/a\\ 
\hline
\end{tabular}
\caption{This table reports the vertex classification results for multiple-graph data with a common vertex set. All numbers are in percentile. }
\label{table2}
}
\end{table}

\section{Conclusion}

This paper introduces a refined graph encoder embedding method, offering both theoretical justification for its effectiveness and practical insights into the role of latent community structures in improving vertex classification. We demonstrate the advantages of the method through a combination of theoretical results, simulations, and real-world experiments.

In contrast to deep learning-based methods such as GCNs and their many variants, our approach offers unique advantages. Deep learning methods excel at numerical optimization and adapt well to various graph-based tasks by defining appropriate objective functions and utilizing gradient descent for parameter optimization. However, these methods often lack theoretical foundation, and their black-box nature provides limited interpretability regarding the graph's structure. They can also be computationally slow for large graphs. By contrast, the refined graph encoder embedding method presented in this paper not only delivers excellent performance but is also theoretically grounded and, most importantly, provides valuable insights into the latent community structures of the graph --- an area where deep learning methods are inherently limited.

Looking ahead, several promising directions for future research emerge. While this study employs linear discriminant analysis for self-training, alternative approaches, including deep learning-based classifiers, could be investigated to further enhance empirical performance on real-world datasets. Additionally, the discovery of underlying latent communities opens avenues for a wide range of downstream analyses, such as improving semi-supervised learning, advancing graph recommendation systems, and addressing other graph-based tasks. Another intriguing possibility is to use the graph encoder embedding as a structured node feature initialization for graph neural networks. Recent work has demonstrated the effectiveness of randomized node feature initialization \cite{eliasof2023graph}, suggesting that higher-quality feature representations generated could yield even better results. Lastly, while our study primarily focused on the discovery and utilization of latent communities for vertex classification, their practical value in domain-specific graph data warrants further exploration. This could provide significant insights and enhance the interpretability of community structures across various real graphs.

\section*{Declarations}

\begin{itemize}
\item Availability of data and materials: The datasets generated and/or analysed during the current study are available in the Github GraphEmd repository, [https://github.com/cshen6/GraphEmd].
\item Competing interests: The authors declare that they have no competing interests.
\item Acknowledgements: This work was supported the National Science Foundation DMS-2113099, and by funding from Microsoft Research. 
\end{itemize}

\bibliographystyle{ieeetr}
\bibliography{shen,general}

\newpage 
\appendix

\bigskip
\begin{center}
{\large\bf APPENDIX}
\end{center}

\section{PseudoCode}

\begin{algorithm}
\caption{GEE Self Training via Linear Discriminant Analysis (GEELDA)}
\label{alg1}
\begin{algorithmic}
\Require The graph adjacency matrix $\mathbf{A} \in \mathbf{R}^{n \times n}$ and a label vector $\mathbf{Y} \in \{0,1,\ldots,K\}^{n}$, where $1$ to $K$ represent known labels, and $0$ is a dummy category for vertices with unknown labels. 
\Ensure A linear transformed encoder embedding $\mathbf{Z}_{1} \in \mathbb{R}^{n \times K}$, self-trained new label $\mathbf{Y}_{1} \in \{0,1,\ldots,K\}^{n}$, and an indicator vector $\delta \in [0,1]^{n}$. 
\Function{GEELDA}{$\mathbf{A},\mathbf{Y}$}
\State $\mathbf{Z}=\mbox{GEE}(\mathbf{A},\mathbf{Y})$; \Comment{original one-hot graph encoder embedding}
\State $\mathbf{Z}_{1}=\mbox{LDA}(\mathbf{Z},\mathbf{Y})$; \Comment{transform the encoder embedding by Equation~\ref{eq1}}
\State $[,\mathbf{Y}_{1}]=\mbox{rowmax}(\mathbf{Z}_{1})$; \Comment{the maximum dimension per vertex}
\State $ind=\operatorname{find}(\mathbf{Y}==0)$;  
\State $\mathbf{Y}_{1}(ind)=0$; \Comment{omit vertices with unknown labels}
\State $\delta_1=(\mathbf{Y}\neq \mathbf{Y}_{1})$; 
\EndFunction
\end{algorithmic}
\end{algorithm}

\begin{algorithm}
\caption{Refined Graph Encoder Embedding (R-GEE)}
\label{alg2}
\begin{algorithmic}
\Require The graph adjacency matrix $\mathbf{A} \in \mathbf{R}^{n \times n}$ and a label vector $\mathbf{Y} \in \{0,1,\ldots,K\}^{n}$; number of refinement $\gamma_{K}$ and $\gamma_{Y}$, set to $5$ by default; stopping criterion $\epsilon \in [0,1]$ and $\epsilon_n \in \mathbb{N}$, set to $0.3$ and $5$ by default. 
\Ensure The refined graph encoder embedding $\mathbf{Z} \in \mathbb{R}^{n \times d}$, a concatenated label matrix $\mathbf{Y}$.
\Function{R-GEE}{$\mathbf{A},\mathbf{Y}, \gamma_{K}, \gamma_{Y}, \epsilon, \epsilon_{n}$}
\State $[\mathbf{Z}_1,\mathbf{Y}_{1},\delta_1]=\mbox{GEELDA}(\mathbf{A},\mathbf{Y})$; $\mathbf{Y}=\mathbf{Y}_1$;
\For{$k=1,\ldots,\gamma_{Y}$}
\State $[\mathbf{Z}_{2},\mathbf{Y}_{2},\delta_2]=\mbox{GEELDA}(\mathbf{A},\mathbf{Y}_{1})$;
\If{$sum(\delta_1) - max( sum(\delta_1) * \epsilon, \epsilon_n) < sum(\delta_1 \cdot \delta_2)$}
\State Break;
\Else
\State $\mathbf{Z}_1=[\mathbf{Z}_1,\mathbf{Z}_{2}]$; $\mathbf{Y}_{1} = \mathbf{Y}_{2}$; $\mathbf{Y}=[\mathbf{Y},\mathbf{Y}_1]$;
\State $\delta_1 = \delta_1 \cdot \delta_2$;
\EndIf
\EndFor
\For{$k=1,\ldots,\gamma_{K}$}
\State $[\mathbf{Z}_{2},\mathbf{Y}_{2},\delta_2]=\mbox{GEELDA}(\mathbf{A},\mathbf{Y}_{1}+ \delta_1*K)$;
\If{$sum(\delta_1) - max( sum(\delta_1) * \epsilon, \epsilon_n) < sum(\delta_1 \cdot \delta_2)$}
\State Break;
\Else
\State $\mathbf{Z}_1=[\mathbf{Z}_1,\mathbf{Z}_{2}]$; $\mathbf{Y}_{1} = \mathbf{Y}_{2}$; $\mathbf{Y}=[\mathbf{Y},\mathbf{Y}_1]$;
\State $\delta_1 = \delta_1 \cdot \delta_2$;
\EndIf
\EndFor
\EndFunction
\end{algorithmic}
\end{algorithm}

\section{Theorem Proofs}
\clt*
\begin{proof}

First, a necessary assumption is that as $n$ goes to infinity, so does $n_k$; i.e., as the number of vertices goes to infinity, the number of vertices per class also increases to infinity. This is a standard regularity assumption in pattern recognition because, without it, the class would become trivial as $n$ increases.

Under SBM, each dimension $k=1,\ldots,K$ of the vertex embedding satisfies
\begin{align*}
\mathbf{Z}(i,k) & = \mathbf{A}(i,:)\mathbf{W}(:,k) \\
                & = \frac{\sum_{j=1}^{n} I(\mathbf{Y}(j)=k)\mathbf{A}(i,j)}{n_k}\\
                & = \frac{\sum_{j=1, j\neq i, \mathbf{Y}(j)=k}^{n} Bern(\mathbf{B}(y,k))}{n_k}.
\end{align*}
If $k=y$, the numerator is a summation of $(n_k-1)$ i.i.d. Bernoulli random variables, since the summation includes a diagonal entry of $\mathbf{A}$, which is always $0$. Otherwise, $k \neq y$ and the numerator is a summation of $n_k$ i.i.d. Bernoulli random variables. 

Checking the Lyapunov condition and applying the central limit theorem, we have
\begin{align*}
\sqrt{n_k}(\mathbf{Z}(i,k) - \mathbf{B}(y,k)) \stackrel{d}{\rightarrow} \mathcal{N}(0,\mathbf{B}(y,k)(1-\mathbf{B}(y,k))). 
\end{align*}
for each dimension $k$.

Note that $\mathbf{Z}(i,k)$ and $\mathbf{Z}(i,l)$ are always independent when $k \neq l$. This is because every vertex belongs to a unique class, so the same Bernoulli random variable never appears in another dimension. Concatenating every dimension yields that
\begin{align*}
Diag(\vec{n})^{0.5} \cdot (\mathbf{Z}(i,:) - \mathbf{B}(y,:)) \stackrel{d}{\rightarrow} \mathcal{N}(0,\Sigma_{y}).
\end{align*}
Note that more details on the Lyapunov condition, as well as cases for other random graph models, can be found in Theorem 1 in \cite{GEE1}.

Now, given $\mathbf{Z}(i,:)$ is normally distributed for $n$ large, it follows immediately from classical pattern recognition \cite{DevroyeGyorfiLugosiBook} that under the normality assumption and a common variance across all $k$, the linear transformation in Equation~\ref{eq1} estimates the conditional probability. This is because the LDA transformation directly estimates $Prob(Y|X)$ when $X|Y$ is normally distributed. Specifically,
\begin{align*}
\mathbf{Z}_{1}(i,k) = \mathbf{Z}(i,k) \Sigma^{+} \mu_{k}- ((\mu_{k}' \Sigma^{+} \mu_{k}) - \mbox{log}(n_K / n))
\end{align*}
is the exact LDA transformation for each class $k = 1,\ldots,K$. Writing it into a matrix expression for all $k$ leads to Equation~\ref{eq1}.
\end{proof}

\refine*
\begin{proof}
From Theorem~\ref{thm1}, it is immediate that the encoder embedding satisfies the law of large numbers, such that
\begin{align*}
\| \mathbf{Z}(i,:) -\mathbf{B}(\mathbf{Y}(i),:) \|_2 \stackrel{n}{\rightarrow} 0.
\end{align*}
It follows that
\begin{align*}
&  \| \mathbf{Z}(i,:) - \mathbf{Z}(j,:)\|_2 - \|(\mathbf{B}(\mathbf{Y}(i),:)-\mathbf{B}(\mathbf{Y}(j),:)) \|_2 \\
\leq & \| \mathbf{Z}(i,:) - \mathbf{Z}(j,:) - (\mathbf{B}(\mathbf{Y}(i),:)-\mathbf{B}(\mathbf{Y}(j),:)) \|_2 \\
= & \| (\mathbf{Z}(i,:) - \mathbf{B}(\mathbf{Y}(i),:)) - (\mathbf{Z}(j,:)-\mathbf{B}(\mathbf{Y}(j),:)) \|_2 \\
\leq & \| \mathbf{Z}(i,:) - \mathbf{B}(\mathbf{Y}(i),:)\|_{2} + \| \mathbf{Z}(i,:) -\mathbf{B}(\mathbf{Y}(j),:)\|_{2} \\
\rightarrow & 0.
\end{align*}
Since the graph encoder embedding is fully dependent on the given labels, when the latent labels are used, we also have
\begin{align*}
\| \mathbf{Z}_0(i,:) -\mathbf{B}(\mathbf{Y}_{0}(i),:) \|_2 \stackrel{n}{\rightarrow} 0,
\end{align*}
so the same derivation and convergence apply to the encoder embedding using latent labels as well. 
\end{proof}

\section{Additional Experiments}

Generally, we observe that varying parameters can lead to differences in results, but for most simulations and real datasets, the impact is minimal. Here, we examine parameter sensitivity, specifically for $\epsilon$ and $\epsilon_n$, under four settings:
\begin{itemize}
\item Setting 1: $\epsilon= 0.6$ and $\epsilon_n=50$;
\item Setting 2: $\epsilon= 0.4$ and $\epsilon_n=20$; 
\item Setting 3: $\epsilon= 0.2$ and $\epsilon_n=5$;
\item Setting 4: $\epsilon= 0.02$ and $\epsilon_n=2$; 
\end{itemize}
The refinement becomes progressively more aggressive across these settings. For example, consider Equation~\ref{close} with $\sum_{i=1}^{n} \delta_1(i) = 100$ (indicating 100 mismatched vertices based on the LDA self-learned labels versus the previous labels) and $\sum_{i=1}^{n} \delta_1(i) \cdot \delta_2(i) = 70$ (where 70 mismatches remain after refinement). In this case, the difference is $30$. Under Settings 1 and 2, refinement would stop, as $30$ is smaller than $\max(\sum_{i=1}^{n} \delta_1(i) * \epsilon, \epsilon_n)$. However, under Settings 3 and 4, refinement would continue, as the difference exceeds the thresholds for these more aggressive settings.

Figure~\ref{fig7} illustrates the vertex classification error and the precision of recovering latent communities under the four parameter settings for simulated graph 2 in Figure~\ref{fig2}. In the left panel of Figure~\ref{fig7}, we see that precision is only slightly affected by parameter changes, decreasing slightly as refinement becomes more aggressive. This indicates that a few vertices are incorrectly reassigned to new communities due to overly aggressive refinement. However, as shown in the right panel, this has a negligible effect on vertex classification error, where all four settings yield nearly identical results.

\begin{figure}
\centering
\includegraphics[width=0.9\linewidth,trim={0cm 0cm 0cm 0cm},clip]{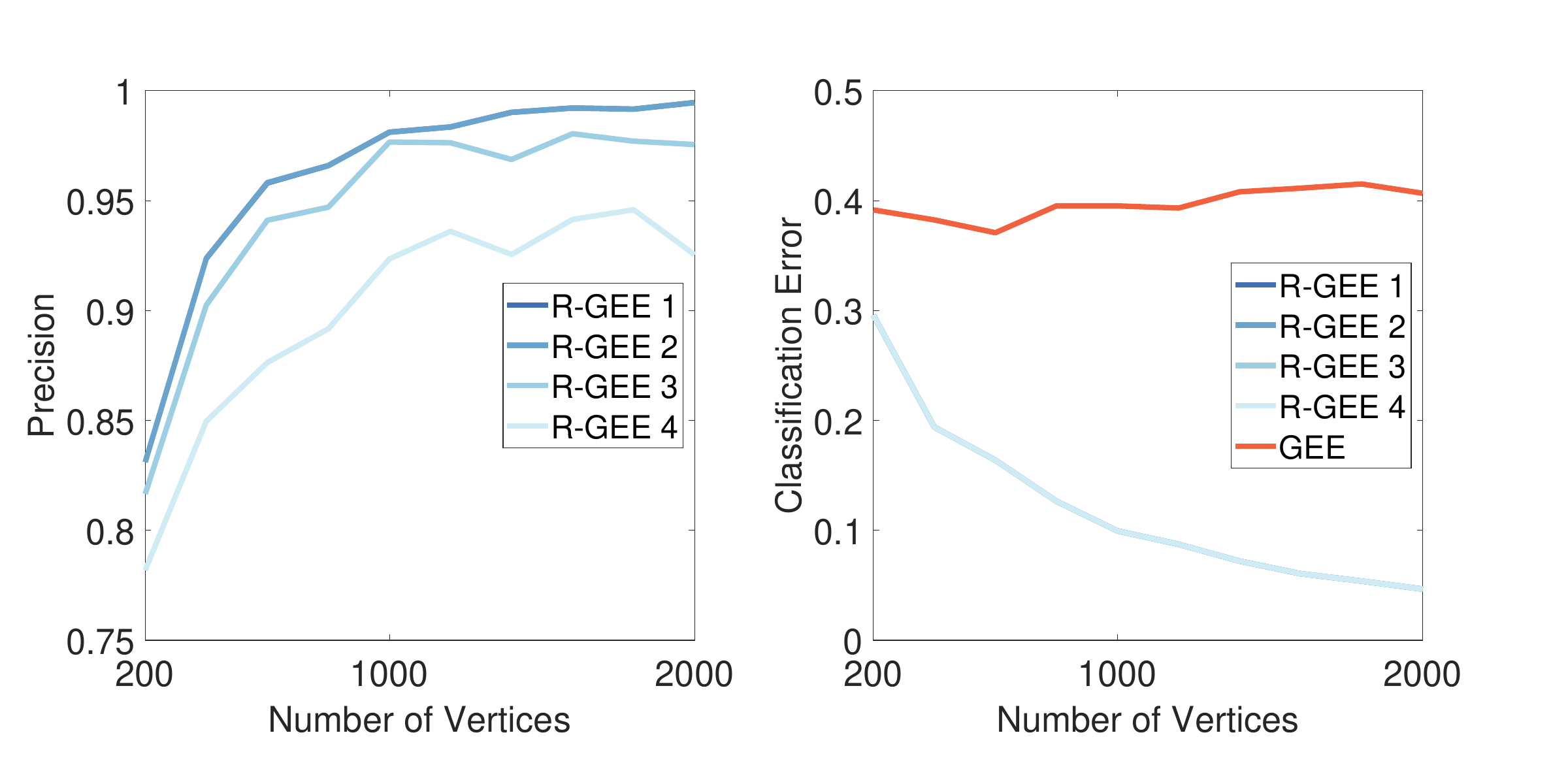}
\caption{The left panel displays the precision of the refined GEE in recovering latent communities under four different parameter settings. The right panel shows the vertex classification error.}
\label{fig7}
\end{figure}

\end{document}